\definecolor{weborange}{rgb}{.8,.3,.3}
\definecolor{webblue}{rgb}{0,0,.8}
\definecolor{internallinkcolor}{rgb}{0,.5,0}
\definecolor{externallinkcolor}{rgb}{0,0,.5}
\definecolor{White}{rgb}{1,1,1} %
\definecolor{Black}{rgb}{0,0,0} %
\definecolor{LightGray}{rgb}{.8,.8,.8} %
\colorlet{ChannelColor}{LightGray} %
\colorlet{ChannelTextColor}{Black} %
\colorlet{ReadoutColor}{White} %
\numberwithin{equation}{section}
\newtheorem{theorem}{Theorem}
\newtheorem{lemma}[theorem]{Lemma}
\newtheorem{algo}[theorem]{Algorithm}
\newtheorem{corollary}[theorem]{Corollary}
\newtheorem{definition}[theorem]{Definition}
\let\epsilon=\varepsilon %
\newcommand{\microspace}{\mspace{.5mu}} %
\newcommand{\ket}[1]{\ensuremath{\lvert\microspace #1
    \microspace\rangle}} %
\newcommand{\bra}[1]{\ensuremath{\langle\microspace #1
    \microspace\rvert}} %
\newcommand{\ketbra}[2]{\ensuremath{\lvert\microspace #1
    \microspace\rangle\! \langle \microspace #2 \microspace \rvert}} %
\newcommand{\class}[1]{\mathsf{#1}} %
\newcommand{\NP}{\class{NP}} %
\newcommand{\MA}{\class{MA}} %
\newcommand{\QMA}{\class{QMA}} %
\newcommand{\QMIP}{\class{QMIP}} %
\newcommand\QMIP*{\ensuremath{\class{QMIP}^*}} %
\newcommand{\PSPACE}{\class{PSPACE}} %
\newcommand{\revPSPACE}{\class{revPSPACE}} %
\newcommand{\MIP}{\class{MIP}} %
\newcommand\MIP*{\ensuremath{\class{MIP}^*}} %
\newcommand{\QMAc}{\ensuremath{\class{QMA}^{c=1}}} %
\newcommand{\QMAexp}{\ensuremath{\class{PreciseQMA}}} %
\newcommand{\QMAexpc}{\ensuremath{\class{PreciseQMA}^{c=1}}} %
\newcommand{\MAexp}{\ensuremath{\class{MA}_{\class{exp}}}} %
\newcommand{\NPpp}{\ensuremath{\class{NP}^{\class{PP}}}} %
\newcommand{\poly}{\mathrm{poly}}
\begin{document}

\title{A Simple Proof of $\QMAexp = \PSPACE$}
\author{Yulong Li \\
Columbia University}
\date{} 
\maketitle

\begin{abstract}
   We give an alternative proof of $\QMAexp = \PSPACE$,  first proved by Fefferman and Lin (\emph{Innov. Theor. Comp. Sci. 2018}), where $\QMAexp$ is the class Quantum Merlin-Arthur with inverse exponential completeness-soundness gap. We adapt the proof of Quantum Cook-Levin Theorem to prove the inclusion $\PSPACE \subseteq \QMAexp$. %The proof immediately shows that $\QMAexp$ is equal to $\QMAexp$ with perfect completeness, which is previously proved by Fefferman and Lin.
\end{abstract}

\section{Introduction}

A crucial focus of quantum complexity theory is to study quantum counterparts of classical complexity classes. By comparing the classical and quantum complexity classes, we may better understand the power of quantum computation models. One particularly important quantum complexity class is Quantum Merlin-Arthur ($\QMA$). It is the quantum analog of Merlin-Arthur ($\MA$), and it can also be seen as the quantum equivalent of $\NP$ because of the close relationship between the canonical $\NP$-complete problem, the satisfiability problem, and the natural $\QMA$-complete problem, the local Hamiltonian problem. The formal definition of $\QMA$ is as following.

\begin{definition}
A promise problem $L = (L_{yes}, L_{no})$ is in $\QMA(c, s)$ if there exists a uniform family of quantum circuits $\{V_x \}_x \in \{0,1\}^{|x|}$, each of size at most $poly(|x|)$ and acting on $poly(|x|)$ qubits, such that:
For $x \in L_{yes}$, there exists a $poly(|x|)$-qubit state $\ket\psi$ such that $\Pr[V_x \text{ accepts $\ket{\psi}$}] \geq c$; for $x \in L_{no}$, for all $poly(|x|)$-qubit state $\ket\psi$:
$\Pr[V_x \text{ accepts $\ket{\psi}$}] \leq s$. The parameter $c$ and $s$ are known as the completeness and soundness of the problem, respectively, and the quantity $c-s$ is known as the \emph{completeness-soundness gap}, or \emph{gap} for short. $\QMA$ is typically defined as $\QMA(c,s)$ for all $c - s$ that is inverse polynomial in $n$.
\end{definition}

One interesting question to ask is how will the precision of values of $c$ and $s$ change the power of $\QMA$. Kitaev showed that as long as the gap $c-s$ is at least inverse polynomial, we can amplify the gap to a constant \cite{KSV02}. It's then natural to ask what if the promise gap is much smaller. Surprisingly, Fefferman and Lin showed that $\QMA$ class with inverse exponentially small gap, denoted as $\QMAexp$, is equal to $\PSPACE$ \cite{FL16}.

% \hynote{I would expand on the previous sentence, by giving a precise definition of $\QMAexp$.} 

\begin{definition}
$\QMAexp = \cup_{c, s \in [0,1], c-s \geq 2^{-poly(n)}}\QMA(c, s)$ where $n$ is the length of the input.
\end{definition}

\begin{theorem} [~\cite{FL16}]
\label{QMA_in_PSPAE}
$\QMAexp = \PSPACE$ 
\end{theorem}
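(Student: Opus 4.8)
The plan is to prove the two inclusions separately. The inclusion $\QMAexp \subseteq \PSPACE$ is the routine direction: given an input $x$ and the verifier $V_x$ acting on an $m$-qubit witness register together with $k$ ancillas ($m, k = \poly(|x|)$), the best achievable acceptance probability equals the largest eigenvalue of the positive semidefinite operator $Q_x := (\bra{0^k} \otimes \id)\, V_x^\dagger\, \Pi_{\mathrm{acc}}\, V_x\, (\ket{0^k} \otimes \id)$ on the $2^m$-dimensional witness space. Each entry of $Q_x$ is a sum of products of $\poly(|x|)$ gate amplitudes of $V_x$, hence computable to within $2^{-\poly(|x|)}$ in space $\poly(|x|)$. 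Deciding whether $\lambda_{\max}(Q_x) \ge c$ or $\lambda_{\max}(Q_x) \le s$, with $c - s \ge 2^{-\poly(|x|)}$, then reduces to approximating the top eigenvalue of a succinctly specified $2^{\poly(|x|)}$-dimensional Hermitian matrix to precision $2^{-\poly(|x|)}$ — equivalently, to a $\poly(|x|)$-depth binary search combined with a polynomial-space test that $\theta \id - Q_x$ is positive semidefinite — which is a standard polynomial-space computation. I would devote only a sentence or two to this.

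The substance is $\PSPACE \subseteq \QMAexp$, which I would obtain by running the history-state construction underlying the Quantum Cook--Levin Theorem on an \emph{exponential-time} computation. Starting from $\PSPACE = \class{DSPACE}(\poly)$, I would first put the machine into a convenient normal form on input $x$ (with $n := |x|$): make it halt within $T = 2^{\poly(n)}$ steps (a machine that runs longer has repeated a configuration and can reject), make it reversible (Bennett), and make it oblivious, so that the step performed at time $t$ — which $O(1)$-local unitary $U_t$ is applied, and to which of the $w = \poly(n)$ qubits — is a function of $t$ and the local tape contents computable in time $\poly(n)$. This presents the computation as a uniform gate sequence $U_1, \dots, U_T$ on $w$ qubits with a fixed initial state and a designated output qubit. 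I would then build the Feynman--Kitaev history state $\ket{\eta} = \frac{1}{\sqrt{T+1}} \sum_{t=0}^{T} (U_t \cdots U_1 \ket{\mathrm{init}}) \otimes \ket{t}$, with the sole modification that the clock is encoded in \emph{binary} on $\lceil \log(T+1) \rceil = \poly(n)$ qubits rather than in unary; the input, propagation, and output checks then become $\poly(n)$-local rather than $O(1)$-local, which is harmless for $\QMA$. The verifier receives a purported $\ket{\eta}$ from Merlin, samples $t \in \{0, \dots, T\}$ uniformly (i.e.\ draws $\lceil \log(T+1)\rceil$ random bits), computes $U_t$ using uniformity of the construction, coherently applies the corresponding local check to $\ket\eta$, and accepts with the induced probability. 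If $x \in L$ the true history state makes every check pass, so completeness is $1$ (so in fact $L \in \QMAexpc$); if $x \notin L$, Kitaev's spectral-gap estimate shows that every state has energy $\ge 1/\poly(T)$ against the associated Hamiltonian, so the acceptance probability is at most $1 - \frac{1}{\poly(T)(T+1)} \le 1 - 2^{-\poly(n)}$. The resulting completeness--soundness gap is $2^{-\poly(n)}$, so $L \in \QMAexp$.

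I expect the one genuinely delicate point to be the normal-form and uniformity bookkeeping: making precise in what sense ``the gate at time $t$'' can be extracted in time $\poly(n)$ from the binary encoding of $t$, which in turn forces a careful choice of $\PSPACE$-machine normal form (oblivious head schedule, reversibilization with polynomially bounded garbage kept inside the computation register) and a design of $\poly(n)$-local propagation terms that correctly enforce the successor relation on binary time stamps (the increment $t \mapsto t+1$, carries included, acts on all $\lceil\log(T+1)\rceil$ clock qubits but is otherwise trivial). Checking that Kitaev's geometric lemma and the $1/\poly(T)$ energy lower bound go through verbatim with $T$ exponential is, by contrast, immediate, since those bounds depend only on $T$ and never on locality or qubit count, and $2^{-\poly(n)}$ precision is exactly what $\QMAexp$ tolerates.
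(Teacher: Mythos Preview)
Your proposal is correct and follows essentially the same approach as the paper: both treat $\QMAexp \subseteq \PSPACE$ as the routine direction and prove $\PSPACE \subseteq \QMAexp$ by reversibilizing the polynomial-space machine via Bennett, applying the Feynman--Kitaev circuit-to-Hamiltonian construction with a binary clock to the resulting exponential-length, polynomial-width circuit, and having the verifier sample a random Hamiltonian term to test. The paper's main lemma and verifier algorithm match your construction almost line for line, including the $1/\poly(T)$ spectral-gap soundness bound and the observation that completeness is in fact perfect.
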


On the other hand, we can upper bound $\MAexp$, the classical analogy of $\QMAexp$ with $\NPpp$, which is widely believed to be strictly contained in $\PSPACE$. This implies a possible separation between the classical complexity class $\MA$ and the quantum analogue $\QMA$, at least in the inverse exponential gap setting.

% \hynote{That by itself is not remarkable; because what if $\NPpp$ is equal to $\PSPACE$? You should point out that it is ``widely believed'' that $\NPpp$ is strictly contained in $\PSPACE$. This implies that there seems to be a separation between the classical and quantum versions of $\MA$ and $\QMA$, at least in the inverse exponential gap setting.}. 

% \hynote{Before this I would add the sentence ``We give an alternate, and arguably simpler, proof that $\QMAexp = \PSPACE$.''} 

We give an alternate, arguably simpler, proof that $\QMAexp = \PSPACE$, specifically the direction $\QMAexp \supseteq \PSPACE$. The direction $\QMAexp \subseteq \PSPACE$ is fairly intuitive. Given a polynomial size proof state and polynomial time verification algorithm, we can simulate the protocol in polynomial space by guessing the proof and performing amplification even when the gap is inverse exponential \cite{FL16}. The reverse direction, $\QMAexp \supseteq \PSPACE$, is more involved. In Fefferman and Lin's original proof, they define the problem \textit{Gapped Succinct Matrix Singularity (GSMS)}, which, informally speaking, takes in a sparse matrix whose smallest eigenvalue is promised to be either $0$ or at least inverse exponential and decides between the two cases. They gave a $\QMAexp$ protocol for $GSMS$ based on Hamiltonian simulation and phase estimation and showed that $GSMS$ is $\PSPACE$-hard by encoding the configuration graph  of a $\PSPACE$ machine into a sparse matrix that satisfies the promise gap of $GSMS$. In a follow-up paper by Fefferman and Lin, they prove a more general result about unitary quantum space, of which $\QMAexp = \PSPACE$ is a corollary. But their main result still requires a nontrivial reduction to a general linear algebra problem \cite{fefferman2016complete}.

The proof inspired us to encode the computation of a $\PSPACE$ machine \emph{directly} using the Feynman-Kitaev circuit-to-Hamiltonian construction \cite{KSV02}. This proof technique was used to show the QMA-completeness of the Local Hamiltonian problem. In our opinion, the direct encoding of the computation history and the close relationship between Hamiltonian problems and $\QMA$ make the proof $\PSPACE \subseteq \QMAexp$ more intuitive and sidestep the linear algebraic problem of \textit{GSMS}. As an easy corollary of the proof, we show that $\QMAexp$ with perfect completeness, denoted as $\QMAexpc$, is equal to $\QMAexp$ (Corollary \ref{qmac=1}), which is first shown in \cite{fefferman2016complete}. Note that perfect completeness does not change the class $\MA$ \cite{zachos1987probabilistic}, the classical analogue of $\QMA$, but it is an open problem whether $\QMA$ is equal to $\QMAc$. In fact, Aaronson showed that $QMA \neq \QMAc$ relative to a quantum oracle \cite{S08}.

\section{Reversible space computation}
One crucial difference between classical and quantum computation is that for quantum circuits, we often require the operations to be unitaries which are inherently reversible, while we typically do not have reversibility constraints for Turing machines. Thus, we need the following theorem about reversible classical computation.
\begin{definition}
We say a promise problem $L = (L_{yes}, L_{no})$ is in $\revPSPACE$ (reversible polynomial space) if it can be decided by a polynomial space reversible Turing machine, i.e., a machine for which every conﬁguration has at most one immediate predecessor. 
\end{definition}

\begin{theorem} [~\cite{B89}]
\label{revpspace}
$\PSPACE = \revPSPACE$
\end{theorem}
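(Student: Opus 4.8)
The inclusion $\revPSPACE \subseteq \PSPACE$ is immediate, since a reversible polynomial-space Turing machine is in particular a polynomial-space one, so the plan is to prove $\PSPACE \subseteq \revPSPACE$. Let $L$ be decided by a deterministic machine $M$ using space $s(n) = \poly(n)$. First I would normalize $M$: as $M$ has at most $2^{O(s(n))}$ configurations, I may assume it halts within $T = 2^{O(s(n))}$ steps (maintain an $O(s(n))$-bit step counter, rejecting on overflow); by appending a clean-up phase that blanks the work tape and returns every head to the origin before halting, I may assume there is a single accepting configuration and a single rejecting configuration; and I may assume the initial configuration has no predecessor.

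The heart of the argument is the compute--copy--uncompute paradigm together with Bennett's recursive checkpointing (the \qq{pebble game}). A single simulated step of $M$ becomes injective if the machine also journals, onto a history track, the transition it just applied. Running this for $T$ steps, copying the verdict bit to a write-protected output register, and then replaying the journal in reverse erases the history and leaves only the input together with the answer; but the journal has length $\Theta(T)$, which is exponential. Bennett's device removes this blow-up: to simulate a $t$-step segment, recursively lay down a checkpoint --- a full configuration, of size $O(s(n))$ --- at its midpoint, recurse on the second half to reach the segment's end, then reverse the first half against the midpoint checkpoint to reclaim its journal, and recurse. With recursion depth $O(\log T)$ only $O(\log T)$ checkpoints are live at once, so the total space is $O(s(n)\log T) = O(s(n)^2) = \poly(n)$. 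Every elementary operation used --- advancing $M$ by one journaled step, its inverse, writing, comparing, or erasing a checkpoint, and the copy to the output register --- is reversible, hence so is the composite machine, which decides $L$ in polynomial space.

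An alternative that avoids checkpointing is the Lange--McKenzie--Tapp traversal of the configuration graph: after the normalization above, $x \in L$ iff the initial configuration lies in the in-tree of the unique accepting configuration in the graph whose edges run from each configuration to its $M$-successor. A depth-first traversal of this in-tree is stackless and reversible, because the parent of a configuration is its $M$-successor, which is recomputable, while its children are enumerated by trying the $O(1)$ candidate \qq{reverse moves} and testing consistency; the only state retained is the current configuration together with an $O(1)$-size direction register, giving space $O(s(n))$.

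I expect the main obstacle to be the bookkeeping rather than the idea: a literal \qq{store the entire history} simulation is already reversible but uses exponential space, so the real content is to check that the recursive checkpointing is globally injective --- that checkpoints are created and reclaimed in a properly nested way and that no information is silently dropped at the seams between recursive phases. The normalization and counting steps are routine.
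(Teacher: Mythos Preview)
The paper does not prove this theorem at all; it merely states it and cites Bennett~\cite{B89}. Your proposal is correct, and your first argument (compute--copy--uncompute with recursive checkpointing yielding $O(s(n)^2)$ reversible space) is exactly Bennett's proof that the citation points to, so in that sense you match the intended source. Your second route via Lange--McKenzie--Tapp is a genuine alternative that the paper does not mention: it trades the pebble-game recursion for a reversible Euler tour of the predecessor tree and improves the space to $O(s(n))$ rather than $O(s(n)^2)$, at the cost of exponential time (which is irrelevant here since only space is constrained). Either argument suffices for the paper's purposes, which only need the qualitative statement that polynomial-space computation can be made reversible.
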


Theorem 5 shows that without loss of generality a polynomial-space computation can be made reversible, and thus can be simulated by a unitary quantum circuit with perfect completeness and soundness.

\begin{corollary}
\label{cor}
If $L \in \PSPACE$, $L$ is recognized by a family of unitary quantum circuits $\{C_n\}$ which run in exponential time and polynomial space in $n$; moreover,  if $x \in L$, then the circuit accepts with probability $1$, otherwise it accepts with probability $0$.
\end{corollary}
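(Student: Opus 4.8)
The plan is to chain Theorem~\ref{revpspace} together with the standard bit-level simulation of a Turing machine by a Boolean circuit, taking care that every gate is reversible. First I would invoke Theorem~\ref{revpspace} to replace the hypothesis $L \in \PSPACE$ by a reversible Turing machine $M$ deciding $L$ in space $p(n) = \poly(n)$. Since $M$ uses only $p(n)$ cells, a complete configuration of $M$ on an input of length $n$ (tape contents, head location, internal state) is encoded in $q(n) = \poly(n)$ bits, and there are at most $2^{q(n)}$ distinct configurations. As $M$ decides $L$ it halts on every input, and as its transition function is injective no configuration can recur before $M$ halts; hence $M$ halts within $T(n) = 2^{q(n)} = 2^{\poly(n)}$ steps, which is exactly the ``exponential time, polynomial space'' regime we want.

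Next I would construct, for each $n$, the unitary circuit $C_n$ on $q(n)$ qubits. A single step of $M$ is a permutation of $\{0,1\}^{q(n)}$ that, given the encoded configuration, updates the $O(1)$ tape cells in the window around the head together with the internal state and the head pointer according to the fixed, finite transition table. This local reversible update is implemented by a sub-circuit of $\poly(n)$ elementary reversible gates (\texttt{NOT}, \texttt{CNOT}, \texttt{Toffoli}), each of which is a permutation matrix and hence a unitary; some care is needed in choosing the encoding of the head position so that ``move left'' and ``move right'' are themselves reversible, but this is routine. Composing $T(n)$ copies of this sub-circuit yields $C_n$: it acts on $\poly(n)$ qubits, consists of $2^{\poly(n)}$ elementary unitary gates, and is uniform because the per-step sub-circuit has a fixed, efficiently computable description independent of the step index.

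Finally, for correctness, I would feed $C_n$ the computational basis state $\ket{x}\ket{0^{q(n)-n}}$, i.e.\ the encoding of the initial configuration of $M$ on input $x$. Because $M$ is deterministic and $C_n$ merely tracks its configuration, after all $T(n)$ steps the register holds the basis state encoding the unique halting configuration of $M$ on $x$; measuring the designated accept qubit then returns $1$ with probability exactly $1$ if $x \in L$ and with probability exactly $0$ otherwise.

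I expect the only real subtleties, neither of them deep, to be the two bookkeeping points above: encoding a Turing-machine step — including head motion — as a genuinely reversible local circuit, so that the composed $C_n$ is unitary rather than merely a classical circuit, and keeping the uniformity of the family transparent despite the exponential gate count. Everything else is the textbook Turing-machine-to-circuit simulation.
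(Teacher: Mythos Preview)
Your proposal is correct and follows exactly the approach the paper has in mind: invoke Theorem~\ref{revpspace} to make the $\PSPACE$ machine reversible, then simulate it step-by-step with reversible (hence unitary) gates on $\poly(n)$ qubits for $2^{\poly(n)}$ steps. The paper itself only devotes a single sentence to this corollary (``a polynomial-space computation can be made reversible, and thus can be simulated by a unitary quantum circuit with perfect completeness and soundness''), so your write-up is simply a more detailed version of the same argument.
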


\section{QMA with exponentially small gap}

Our first step of the proof is to construct a Hamiltonian with inverse exponential promise gap given a $\PSPACE$ machine (or equivalently a quantum circuit using exponential time and polynomial space) based on history state construction, which is shown in Lemma. \ref{pspace_ham}. The second step is to give a $\QMAexp$ protocol for the Local Hamiltonian Problem with inverse exponential promise gaps, which is Algorithm \ref{qma_protocal}. 

% \begin{lemma}[~\cite{aharonov2002quantum}]
% \label{min_eval}
% Let $H_1$ and $H_2$ be two Hermitian positive semi-deﬁnite matrices, and let $N_1$ and $N_2$ be their null spaces, respectively. If the angle between $N_1$ and $N_2$ is some $\theta > 0$, and the second eigenvalue of both $H_1$ and $H_2$ is $\geq \lambda$ then the minimal eigenvalue of $H_1 + H_2 \geq \lambda \sin^2(\theta / 2)$.
% \end{lemma}

\begin{lemma}
\label{pspace_ham}
If $L$ is recognized with perfect completeness and soundness by a family of quantum circuits $\{C_n\}$ which run in exponential time and polynomial space in $n$, then for all $x \in \{0,1\}^n$, there exists a Hamiltonian $H_x$ such that, if $x \in L$, $\min_{\psi} \bra \psi H_x \ket \psi = 0$, and if $x \notin L$, $\min_{\psi} \bra \psi H_x \ket \psi \geq \frac{1}{2(T+1)^3}$.
\end{lemma}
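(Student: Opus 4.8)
The plan is to adapt the Feynman--Kitaev circuit-to-Hamiltonian construction directly to the exponential-time, polynomial-space circuit $\{C_n\}$ guaranteed by Corollary~\ref{cor}. Write $C_x = U_T \cdots U_2 U_1$ for the sequence of $T = 2^{\poly(n)}$ elementary gates obtained by hard-wiring the input $x$, each acting on $\poly(n)$ qubits. Because the machine runs in polynomial space, a configuration --- hence each ``snapshot'' of the computation --- is describable by $\poly(n)$ qubits; the history state lives on a Hilbert space of dimension $2^{\poly(n)} \cdot (T+1)$, where the first register is the work space and the clock register ranges over $\{0, 1, \dots, T\}$ encoded in unary on $T+1$ qubits (or in binary if one wants a genuinely succinct description; either way the Hamiltonian terms are local and have a $\poly(n)$-size classical description). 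The target Hamiltonian is the usual $H_x = H_{\text{in}} + H_{\text{prop}} + H_{\text{out}} + H_{\text{clock}}$, where $H_{\text{prop}} = \sum_{t=1}^{T} H_t$ enforces $\ket{\psi_t} = U_t \ket{\psi_{t-1}}$, $H_{\text{in}}$ penalizes non-initialized ancilla qubits at time $0$, $H_{\text{out}}$ penalizes a rejecting output qubit at time $T$, and $H_{\text{clock}}$ (when using unary) penalizes illegal clock strings.

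The completeness direction is immediate: if $x \in L$, Corollary~\ref{cor} gives an accepting computation with probability exactly $1$, so the corresponding history state $\ket{\eta} = \frac{1}{\sqrt{T+1}} \sum_{t=0}^{T} \ket{t}_{\text{clock}} \otimes \ket{\psi_t}$ is a zero-eigenvector of every term, hence $\bra{\eta} H_x \ket{\eta} = 0$ and $\min_\psi \bra{\psi} H_x \ket{\psi} = 0$ (the Hamiltonian is positive semidefinite by construction). The soundness direction is where the real work lies and where I expect the main obstacle. One must show that when $x \notin L$ --- so \emph{every} computation path rejects with certainty --- the smallest eigenvalue of $H_x$ is bounded below by $\Omega(1/(T+1)^3)$. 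The standard route is Kitaev's geometric-lemma argument: change basis by the ``history unitary'' $W = \sum_t \ket{t}\!\bra{t} \otimes U_t \cdots U_1$ so that $H_{\text{prop}}$ becomes $E \otimes \id$ where $E = \sum_t (\ket{t}\!\bra{t}) $-type difference operator is (up to the clock encoding) the Laplacian of a path graph on $T+1$ vertices, whose spectral gap is $\Theta(1/T^2)$; then invoke the geometric lemma (Kitaev, $\|A+B\| $-type bound for two projectors, or the Nullspace/Jordan argument) to combine the $1/T^2$ propagation gap with the constant ``rejection'' penalty coming from $H_{\text{in}} + H_{\text{out}}$, losing another factor of $T$ and yielding the claimed $\frac{1}{2(T+1)^3}$ bound. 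The subtlety peculiar to the $\PSPACE$ setting, as opposed to the usual $\QMA$ proof, is purely quantitative: $T$ is exponential, so one has to be careful that the bound is a clean inverse polynomial \emph{in $T$} (not in $n$) and that the locality/sparsity of $H_x$ is preserved --- each term still acts on $O(1)$ or $\poly(n)$ qubits and the whole family $\{H_x\}$ is uniformly generated, which is exactly what the subsequent $\QMAexp$ protocol (Algorithm~\ref{qma_protocal}) will need.

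Concretely, the steps I would carry out, in order, are: (1) fix the gate decomposition of $C_x$ and the clock encoding, and write down $H_x$ explicitly as a sum of positive-semidefinite $\poly(n)$-local terms with operator norm $O(1)$; (2) verify positive semidefiniteness and exhibit the history state to get the $x \in L$ case with value exactly $0$; (3) apply the history unitary $W$ to bring $H_x$ into the form $W^\dagger H_x W = H_1' + H_2'$ where $H_1'$ (from $H_{\text{prop}}$, plus clock penalties) has a known nullspace --- the span of history states with an arbitrary time-$0$ state --- and spectral gap $\Theta(1/T^2)$ on its orthogonal complement, and $H_2'$ (from $H_{\text{in}}+H_{\text{out}}$) is a nonnegative operator that, restricted to the nullspace of $H_1'$, has value at least the rejection probability, which is $1$ when $x \notin L$; (4) invoke Kitaev's geometric lemma / the projection lemma to conclude $\lambda_{\min}(H_x) \geq \frac{c}{T^3}$ and track the constant to land at $\frac{1}{2(T+1)^3}$. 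I expect step~(3)--(4) to be the crux: not because the argument is unknown --- it is the textbook local-Hamiltonian analysis --- but because one must make sure nothing in it implicitly assumed $T = \poly(n)$, in particular that the clock register's polynomial (in $T$) size does not blow up the number of qubits beyond $\poly(n)$ when binary-encoded (handled by using $\log T = \poly(n)$ clock qubits and the standard ``increment'' propagation terms), and that the final gap is expressed as a function of $T$ alone so that the downstream $\QMAexp$ verifier, which can afford exponential precision, can detect it.
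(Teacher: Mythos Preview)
Your proposal is correct and follows essentially the same route as the paper: the Feynman--Kitaev construction with $H_{\text{in}}+H_{\text{out}}+H_{\text{prop}}$, the history state witnessing the zero eigenvalue when $x\in L$, and Kitaev's geometric-lemma analysis for the $\frac{1}{2(T+1)^3}$ soundness bound. The paper's version is in fact terser than yours --- it fixes a binary $\log T$-qubit clock, omits an explicit $H_{\text{clock}}$ term, and simply cites \cite{aharonov2002quantum} for the soundness calculation you spell out in steps~(3)--(4).
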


\textbf{Proof:}
We adapt the Quantum Cook-Levin Theorem to the setting of exponentially-long quantum computations. The idea is to construct a Hamiltonian to check to check whether a state encodes a valid history state of a circuit.

\textbf{Construction:}

Suppose the quantum circuit $C_n$ that acts on $S(n)$ qubits and consists of $T(n) = \exp^{poly(n)}$ unitary two-qubit gates, $U_1, \cdots, U_T$.  For $b \in \{0,1\}, i \in [S(n)]$, we define $\Pi_i^{\ket {b}}$ to be the projection operator that operates on qubit number $i$ and projects on the subspace spanned by $\ket{b}$.
 
$$H_{in}^i = \Pi_i^{\ket {\lnot x_i}} \otimes \ketbra{0}{0}_C$$
$$H_{out} = \Pi_1^{\ket 0} \otimes \ketbra{T}{T}_C$$
$$H_{prop}^t = \frac{1}{2} (I \otimes \ketbra{t}{t} + I \otimes \ketbra{t-1}{t-1} - U_t \otimes \ketbra{t}{t-1} - U_t^\dagger \otimes \ketbra{t-1}{t})$$
$$H_x = \sum_{i=1}^n H_{in}^i + H_{out} + \sum_{t=1}^{T} H_{prop}^t$$

for $t \in \{0, 1, \dots, T\}$, where $\ket t_C$ is the binary representation of $t$ taking $\log T$ qubits. If $x \in L$, we have the history of computation $\ket \eta = \frac{1}{T+1}\sum_{i=1}^T U_t\cdots U_{1} \ket \chi \otimes \ket t$ where $\ket \chi = \ket x \otimes \ket 0 ^ {\otimes S-n}$ is the start state encoding the work space including an input $x$. It is straightforward to check that $\bra \eta H \ket \eta = 0$ if $x \in L$ using the fact that $C_n$ accepts $x$ with probability $1$ if $x \in L$. If $x \notin L$, following the analysis in ~\cite{aharonov2002quantum}, one can show that $\bra \psi H \ket \psi  \geq \frac{1}{2(T+1)^3}$ for all $\ket \psi$.

\begin{algo}
\label{qma_protocal}

We give a $\QMAexp$ protocol, which given a polynomial size proof state, check the eigenvalue of $H_x$ defined in \ref{pspace_ham} in polynomial time such that if $\min_{\ket \psi} \bra \psi H_x \ket \psi = 0$, always accepts; otherwise, reject with at least inverse exponential probability. 

\begin{enumerate}
    \item Pick $y \in [T+n+2]$ uniformly at random.
    \item Define
    \[
      H_{test}(y) = 
      \begin{cases}
        H_{prop}^{y-1}, & y \in  [T+1] \\
        H_{in}^{y-T-1}, & y \in [T+2, T+n+1] \\
        H_{out}, & y = T+n+2
      \end{cases}
    \]
    \item If $H_{test}(y) = H_{in}^i$ or $H_{test} = H_{out}$, which are projections onto standard basis, measure $\ket \psi$, reject if the measurement is in the projected space.
    \item If  $H_{test}(y) = H_{prop}^t$, we apply several transformations. Note that 
    $$R_t^\dagger H_{prop}^t R_t = \frac{1}{2} I \otimes (\ket t - \ket {t-1}) (\bra t - \bra {t-1})$$ where $R =\sum_{t=0}^T U_t U_{t-1}\cdots U_1\otimes \ketbra{t}{t}$. Thus, we can design a procedure that involves controlled unitaries of $U_t$ and $U_{t+1}$ that results in measurement of $1$ with probability $\bra \psi H_{prop}^t\ket \psi$ in which case we reject.
    
    % \begin{enumerate}
    %     \item First, we apply a controlled gate $C_{\ket t}-U_t^\dagger$. We can implement this with controlled unitaries on a single qubit by adding an ancillary bit. 
    %     \item Now we need to determine if the clock register is in state $\ket t + \ket {t-1}$. We test this with the following steps: 
    %     \begin{enumerate}
    %         \item Adding an ancillary bit, $\sum_\tau \alpha_\tau \ket \tau \ket 0$
    %         \item Flip the ancilliary controlled on $\ket{t-1}$, $\sum_{\tau \neq t-1} \alpha_\tau \ket \tau \ket 0 + \alpha_{t-1}\ket{t-1}\ket{1}$.
    %         \item Flip the clock bits conditioned on the ancilliary, $\sum_{\tau \neq t-1} \alpha_\tau \ket \tau \ket 0 +  \alpha_{t}\ket{t}\ket{0} + \alpha_{t-1}\ket{t}\ket{1}$.
    %         \item Apply $H^\dagger$ onto the ancillary qubit, $\sum_{\tau \neq t-1} \alpha_\tau \ket \tau \ket + \frac{\alpha_{t}+\alpha_{t-1}}{2} \ket{t}\ket{0} + \frac{\alpha_{t}-\alpha_{t-1}}{2}\ket t \ket 1$.
    %     \end{enumerate}
        % Finally, measure the clock qubit and the ancilliary. Reject if the result is $\ket t \ket 1$, and accept otherwise.
    % \end{enumerate}
\end{enumerate}
The total probability of rejection is
$$\frac{1}{T+n+2}\sum_{y=1}^{T+n+2} \bra \psi H_{test}(y)\ket \psi = \frac{1}{T+n+2} \bra \psi H_x \ket \psi$$
For $x \in L$, this equals to $0$; for $x \notin L$, this is at least $\exp(-\poly(n))$.
\end{algo}

Finally, the result immediately implies that we can always boost the completeness of a $\QMAexp$ protocol to $1$, which is first shown in \cite{fefferman2016complete}.

\begin{definition}
$\QMAexpc = \QMA_{poly}(1, 1 - 2^{-poly})$
\end{definition}

\begin{corollary}
\label{qmac=1}
\QMAexp = \QMAexpc
\end{corollary}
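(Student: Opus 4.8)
The plan is to observe that the chain of inclusions already assembled in this section pins the class $\QMAexpc$ between $\PSPACE$ and $\QMAexp$, both of which coincide; the corollary then falls out by squeezing. Concretely I would proceed in three short steps. First, the inclusion $\QMAexpc \subseteq \QMAexp$ is immediate from the definitions: a protocol with completeness $1$ and soundness $1 - 2^{-\poly(n)}$ has completeness--soundness gap at least $2^{-\poly(n)}$, so it witnesses membership in $\QMAexp$. Second, I would invoke the ``easy'' direction $\QMAexp \subseteq \PSPACE$ described in the introduction and established in \cite{FL16}: a polynomial-size witness can be guessed and the verification circuit simulated and amplified in polynomial space even when the gap is only inverse exponential.

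The third step is the real content of the corollary: the protocol built from Lemma \ref{pspace_ham} together with Algorithm \ref{qma_protocal} is in fact a $\QMAexpc$ protocol, not merely a $\QMAexp$ protocol. Indeed, by Corollary \ref{cor} the circuit $C_n$ accepts with probability exactly $1$ on yes-instances and exactly $0$ on no-instances, so the history-state Hamiltonian $H_x$ has minimum eigenvalue exactly $0$ when $x \in L$ and at least $\tfrac{1}{2(T+1)^3}$ otherwise. Hence the verifier of Algorithm \ref{qma_protocal} rejects the honest history state with probability $0$ when $x \in L$ (perfect completeness), and rejects any purported witness with probability at least $\tfrac{1}{T+n+2}\cdot\tfrac{1}{2(T+1)^3} = 2^{-\poly(n)}$ when $x \notin L$. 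Since the witness is a $\poly(n)$-qubit state and the verification runs in $\poly(n)$ time — using uniformity of $\{C_n\}$ to compute the description of $U_t$ from the clock value $t$ given in binary, so that the controlled applications of $U_t$ and $U_{t+1}$ are efficient — this yields $\PSPACE \subseteq \QMAexpc$.

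Combining the three steps gives
$$\QMAexp \subseteq \PSPACE \subseteq \QMAexpc \subseteq \QMAexp,$$
so all three classes coincide and in particular $\QMAexp = \QMAexpc$. I do not expect any genuine obstacle here beyond the bookkeeping already carried out; the one point worth stating explicitly is that the verifier of Algorithm \ref{qma_protocal} runs in polynomial time, which is exactly the uniformity argument just mentioned.
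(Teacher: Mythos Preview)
Your proposal is correct and follows essentially the same route as the paper: both argue the nontrivial inclusion $\QMAexp \subseteq \QMAexpc$ by passing through $\PSPACE$ via Corollary~\ref{cor}, Lemma~\ref{pspace_ham}, and Algorithm~\ref{qma_protocal}, observing that the resulting protocol has perfect completeness. You are simply more explicit than the paper in stating the trivial containment $\QMAexpc \subseteq \QMAexp$ and in flagging the uniformity argument needed for the verifier to run in polynomial time.
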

\begin{proof}
For any $L \in \QMAexp$, by Theorem \ref{QMA_in_PSPAE} and Corollary \ref{cor}, we have a uniform family of unitary quantum circuits $\{C_n\}$ which run in exponential time and polynomial space in $n$ that accepts a string in $L$ with probability $1$ and rejects otherwise. Then by Lemma \ref{pspace_ham} and Algorithm \ref{qma_protocal}, we have a $\QMAexpc$ protocol for $L$.
\end{proof}

However, this is an indirect reduction that goes through $\PSPACE$. It is thus interesting to ask whether there is a direct reduction from $\QMAexp$ to $\QMAexpc$.

\section{Acknowledgements}

We are very thankful to Henry Yuen for introducing the problem and his guidance throughout the process. We are also thankful for helpful discussions with Adrian She at various stages of the work.

\bibliographystyle{alpha}
\bibliography{bibliography.bib}

\end{document}